\newtheorem{theorem}{Theorem}
\newenvironment{proof}[1][Proof]{\begin{sloppypar}\noindent\textit{#1.} }
{\vspace*{3mm}\hfill$\Box$\end{sloppypar}}
\newtheorem{lemma}{Lemma}
\newtheorem{definition}{Definition}
\begin{document}

\title{Energy-Optimal Scheduling in Low Duty Cycle Sensor Networks }

\author{\IEEEauthorblockN{Nursen Aydin, Mehmet Karaca and Ozgur Ercetin}
\IEEEauthorblockA{Faculty of Engineering and Natural Sciences,
Sabanci University, Istanbul, Turkey\\
\{nursenaydin,mehmetkrc,oercetin\}@sabanciuniv.edu} }

\maketitle
\begin{abstract}
Energy consumption of a wireless sensor node mainly depends on the
amount of time the node spends in each of the high power active (e.g.,
transmit, receive) and low power sleep modes.  It has been well
established that in order to prolong node's lifetime the duty-cycle
of the node should be low.  However, low power sleep modes usually have
low current draw but high energy cost while switching to the active mode with
a higher current draw.  In this work, we investigate a MaxWeight-like
opportunistic sleep-active scheduling algorithm that takes into account time-
varying channel and traffic conditions. We show that our algorithm is energy
optimal in the sense that the proposed ESS algorithm can achieve
an energy consumption which is arbitrarily close to the global
minimum solution. Simulation studies are provided to confirm the theoretical results.
\end{abstract}

\begin{IEEEkeywords}
Energy model, Sleep scheduling, Lyapunov Optimization, Sensor
networks.
\end{IEEEkeywords}

\section{Introduction}
Wireless sensor network consists of many battery operated nodes with
limited processing and wireless communicating abilities. They are
used in many different areas such as military, scientific research
or medical diagnostics. Since sensor networks are uniquely
identified by their requirement of operation for a long period
without outside intervention, energy consumption is a primary
concern. Unnecessary energy consumption can be caused from the
implementation at PHY layer. For instance, keeping the sensor nodes
in active all the time is one of the main  inefficiency.
Furthermore, energy unaware scheduling algorithms employed at MAC or
above layers is the another reason for the unnecessary energy
consumption.

A particularly important strategy is to minimize wasted energy due
to idle listening, in other words, operate the nodes with low duty
cycles~\cite{survey}. Duty cycle is defined as the proportion of
time the node stays active in its lifetime. Thus, it is necessary to
reduce the duty cycle in order to resolve this conflict. A sensor
node consumes energy while transmitting, receiving and sleeping
Also, switching between these three modes is the another issue and
the sensors require extra time and spend extra energy for switching.
In most prior works, it was assumed that the switching energy is
negligible compared to the energy spent in other modes. It is
recently observed that, this assumption does not hold in low duty
cycle sensor networks~\cite{ruzzelli1}. For example,
in~\cite{ruzzelli1} for \texttt{CC1010} type sensor nodes, while the
switching energy from sleep mode to transmit mode is given as 47.75
$\mu J$ for the switching time is 0.7 ms, the energy consumed during
the transmission taken 0.7 ms is equal to 124.8$\mu J$. This
measurement shows that the switching energy is crucial and should be
taken into account for any energy efficient scheduling algorithm.

In the literature, there are many scheduling algorithms that have
been developed for network stability problem with strong theoretical
results. Tassiulas et al. in~\cite{Tass} first introduced the design
of backpressure algorithms based on Lyapunov drift techniques to
achieve network stability and the authors in~\cite{neely} developed
algorithm which deals with performance optimization and queue
stability problems simultaneously in a unified framework. There
exists also a rich literature on using \emph{Lyapunov drift
optimization} for solving problems of energy optimization in
wireless networks. In this paper, one of our objectives is to show
the significance of switching energy  in the design of scheduling
policy.

\section{Related Works}
There are many works that offer low energy consumption in the literature~%
\cite{related2},~\cite{related3},~\cite{related4},~\cite{related5}.
In all these works, the proposed algorithms and experiments have
been designed by only taking into account the energy consumption in
the active and sleep modes but they have not considered the consumed
energy during switching from one state to another. In addition, none
of these works deals with the
network stability problem with the objective of minimum energy. In~\cite%
{related1}, the authors have proposed wireless sensor network
protocols that take into account the source of energy consumptions
in a network simulation model. They have also mentioned about
switching energy but leaved it as an open research area. The study
in~\cite{related6} have considered switching energy cost since it
has been observed that significant energy consumption occurs when
switching from sleep mode to the active mode. On the other hand, in
some studies like~\cite{related7} it has been stated that the energy
cost of switching is small. Generally, it is common to ignore
switching energy in the literature. There are also various
approaches proposed for designing duty cycling
protocols~\cite{polastre},~\cite{ruzzelli1},~\cite{gu}, for
sleep mode protocols~\cite{jurdak} and for routing protocols~\cite{ruzzelli}%
,~\cite{chang}. All of these works try to maximize the lifespan or
the utility of the network without considering the network stability
problem.

The works in~\cite{sridharan},~\cite{recharge} and~\cite{song} are
the most important ones considering network stability issue for
wireless sensor network. In~\cite{sridharan}, the authors proposed a
back pressure algorithm designed using a Lyapunov drift based
optimization framework using the receiver capacity model. However,
they did not aim to minimize the average energy consumption. An
optimal control algorithm for rechargeable wireless sensor nodes is
proposed in~\cite{recharge} where the objective function is maximize
the network utility  subject to stability. In~\cite{song}, the
authors proposed a scheduling algorithm based on Lyapunov
optimization theory in order to minimized the consumed energy while
maintaining the network stability. However, they do not consider low
duty cycle sensor networks.

In this paper, we develop a throughput optimal scheduling algorithm
based on the Lyapunov drift framework which considers not only
minimizing energy but also sustaining network stability for low duty
cycle sensor networks. Our energy switching and scheduling (ESS)
algorithm is energy optimal in the sense that it minimizes the
overall expected energy consumed in the network over all scheduling
algorithms and remains throughput optimal. The proposed ESS
algorithm prolongs the lifetime compared to the benchmark algorithm
which does not consider switching energy. Therefore, our proposed
ESS algorithm is more favorable for scheduling in low duty cycle
wireless sensor networks.

The rest of the paper is organized as follows. Section III describes
the system model and problem formulation. The proposed ESS algorithm
is introduced in Section IV. Lyapunov optimization technique which
is used to show the performance and the optimality of ESS algorithm
and a distributed algorithm are given in Section V. We present our
simulation result in Section VI and Section VII concludes this
paper.

\section{System Model and Problem Statement}
\subsection{System Model}
We consider an uplink scenario where $N$ sensor nodes sense the
environment and transmit to a base station. The network is operated
in slotted time $t = 0, 1, 2,...$ where slot $t$ corresponds to the
time interval $[t,t+1)$. The node channels are time-varying and the
instantaneous channel state is represented by $\mathbf{M}(t)$ and
assume it is i.i.d distributed over a finite set $\mathcal{M}$,
$\mathbf{M}(t) \in \mathcal{M}$. We also define $\pi_m \triangleq
Pr(\mathbf{M}(t)=m)$ as steady state probability of being in channel
state $m$ at time slot $t$. Channels are assumed to hold their state
within one time slot. The data rate of node $n$ during slot $t$ is
represented by $\mu_n(I_n(t),M_n(t))$ (in units of packets/slot)
where $M_n(t)$ is the channel state of node $n$ and $I_n(t)=1$,
whenever node $n$ is selected as a transmitting node at time $t$,
and $I_n(t)=0$ otherwise. We assume that $\mu_n(I_n(t),M_n(t))$ is
upper bounded with $\mu_n(I_n(t),M_n(t)) \leq \mu_{max}$. For ease
of notation, we use $\mu_n(t)$ instead of $\mu_n(I_n(t),M_n(t))$ in
the rest of the paper.

At each time slot $t$, the number of new packets generated by node $n$
is denoted by $R_n(t)$. We assume that $R_n(t)$ is i.i.d for each
time slot with an average rate of $\lambda_n$, and it is upper
bounded with $R_n(t) \leq R_{max}$. We define the network admission
rate vector as
$\boldsymbol{\lambda}=\{\lambda_1,\lambda_2,...,\lambda_N\}$ where
$\lambda_n$ is the average exogenous arrival rate of node $n$. Let
$\Lambda$ be the network capacity region, i.e., the set of all
feasible admission rate vectors that the network can support. 
\subsection{Notations}
In our network model, there are $K$ the sensor nodes which can be in
active or sleep modes at a particular time slot. Therefore, the node
$n \in N$ where $N$ represents the set of all nodes has an action
space denoted as $A=\{0,1\}$. We denote $s$ as the action taken by
any node and $s=0$ and $s=1$ represent the sleep and active modes
respectively. The sensor can also switch from active to sleep or
vice versa and we denote $sw \in \{01, 10\}$ as the switching action
where $sw=01$ and $sw=10$ represent the switching from sleep to
active and active to sleep modes respectively. In addition, we
denote the set of all sleep and active nodes as $Sl$ and $Ac$
respectively at a given time slot.
\subsection{Problem Formulation}
We begin with a definition of network stability. Let $Q_n(t)$ be the
queue backlog of node $n$ at time slot $t$.
\begin{definition}
A queue is strongly stable if
\begin{equation}
\limsup_{t\rightarrow \infty}\frac{1}{t}
\sum_{\tau=0}^{t-1}\mathbb{E}(Q_n(t)) < \infty \label{eq:defination}
\end{equation}
\end{definition}
Moreover, if every queue in the network is stable then the network
is called stable. The system stability region is the the closure of
the convex hull of all arrival rate points for which there exists a
feasible scheduling policy that achieves system
stability~\cite{neely}.

We consider an energy consumption model defined as follows. At the
beginning of transmission, each sensor has a full battery of $E$. In
each time slot, a particular node can be either in active or sleep
modes according to its remaining energy, queue backlog and channel
condition. During the sleep mode, the node is unable to transmit
packets. However, packets continue to be generated by the node, and
they are queued until the node has the opportunity to transmit them.
Let $e_s^n(t)$ be the energy consumed by node $n$ when it is in mode
$s \in \{0,1\}$ at time $t$. When the node is active, it consumes an
energy $c_n(t)$ that depends on the current draw of active mode by
the circuit, and how much time the node spends in active mode. Also,
the energy needed for packets transmission should be included to the
total energy expenditure when the node is selected to transmit. Then
at time $t$, the total energy consumption by node $n$ in active mode
becomes $e_1^n(t)=c_n(t)+\mu_n(t) \alpha_n$ where $\alpha_n$ denotes
the energy spent for a packet transmission. Furthermore, if a node
switches from sleep-to-active or active-to-sleep modes, then the
switching energy denoted as $e_{sw}^n(t)$ is consumed. The values of
energy costs are given in Section VI. Hence, the overall energy
spent by node $n$ during time slot $t$ is given by:
\begin{equation}
H^n(t)=\sum_{s\in A} \textrm{e}_s^n (t) + e_{sw}^n (t) \qquad s\in
\{0,1\}. \label{eq:energy-sleep}
\end{equation}
\noindent and the total energy consumption in the network at time
slot $t$ is obtained by summing over all nodes in the system,
\begin{equation}
H(t)=\sum_n \sum_{s\in A}  \textrm{e}_s^n (t) + \sum_n e_{sw}^n (t)
\qquad s\in \{0,1\} \label{eq:energy-total}
\end{equation}
Define  the time average expected total energy consumption as,
\begin{align}
\overline{H}=\limsup_{t\rightarrow\infty}
\frac{1}{t}\sum_{\tau=0}^{t-1}\mathbb{E}\left[\sum_n \sum_s
\textrm{e}_s^n (\tau) + \sum_n e_{sw}^n (\tau) \right]
\end{align}
The expectation is with respect to the randomness that arises from
channel variations and arrival process and possibly from random
stationary switching policy.  The overall energy consumption during
time $t$ is upper bounded by a finite value since all sensors have a
limited battery power and thus,  without lost of generality, we
assume that the following is satisfied at every time slot,
\begin{align}
\sum_n H^n(t)\leq H_{max}, \; \forall t
\end{align}
We are interested in minimizing the total average energy consumption
in the network while keeping the queue sizes of the sensor nodes
bounded. Then, our optimization problem can be formalized as
follows,
\begin{align}
\min \quad & \overline{H} \label{eq:basic-optim}\\
s.t. \quad &\text{network stability.}
\end{align}

\section{Energy-Aware Switching and Scheduling (ESS) Algorithm }
In this section, we introduce our energy aware sleep-active
scheduling algorithm which asymptotically minimizes the average
network energy consumption subject to the network stability.
Furthermore, the proposed algorithm is shown to be throughput
optimal meaning that the algorithm can guarantee the network
stability for all feasible network admission rates.

At each time slot, the proposed Energy-aware Switching and
Scheduling (ESS) algorithm determines the energy modes of operation
for each sensor nodes and selects the transmitting node which maximizes the following:\\
 \noindent \textbf{ESS}\\
\begin{equation}
 \max_{n\in Ac,s} \quad Q_n(t)\mu_n(t)-V \textrm{e}_{sw}^n (t)- V
 \sum_{s\in A}
\textrm{e}_{s}^n (t).
 \label{eq:obj}
 \end{equation}
In \eqref{eq:obj}, $V>0$ is a system parameter and shows the
well-known delay-energy tradeoff~\cite{neely}. For very large value
of $V$ we can push the average energy consumed by ESS algorithm
arbitrarily close to the global minimum energy. However, in that
case in order to maximize~\eqref{eq:obj} nodes stay in the sleep
mode most of the time, and consequently, the queue sizes increase.

\textit{Remark 1}: As it is seen in~(\ref{eq:obj}), ESS algorithm is
different from the Max Weight algorithm, and the algorithm proposed
in~\cite{song} since ESS not only considers the energy consumption
in sleep and active modes but also takes into account the switching
energy cost. It is shown in simulation studies that this cost cannot
be ignored and has significant in the life time of the network.

\textit{Remark 2}: The basic properties of ESS algorithm are as
follows. In each time slot $t$ at most one node is active, and this
node transmits. The active node continues to transmit in consecutive
slots until some other node is the maximizer of~(\ref{eq:obj}). On
the other hand, the rest of the nodes stay in the sleep mode and
their queue backlogs increase. Once the queue backlog of a sleeping
node is sufficiently high in order to maximizes~(\ref{eq:obj}) then
this node is selected to transmit. Furthermore, if all of the queue
backlogs are low, the nodes choose to stay in the sleep mode. Thus,
the system can be in the idle state where no nodes choose to be
active during some of the time slots.
\begin{lemma} ESS satisfies the following properties.
\begin{enumerate}
\item Sensor nodes transmit in a bursty fashion, i.e., transmit
multiple packets once they capture the channel.

\item The system operation is non-work conserving, i.e., there are
idle slots when no sensor node transmits even if their backlog is
non-empty.

\end{enumerate}
\end{lemma}
\begin{proof}
For notational brevity, let $n$ and $m$ represent the active and
sleeping nodes respectively and their corresponding queue sizes are
denoted as $Q_n(t)$ and $Q_{m}(t)$ at time slot $t$. If the active
node $n$ prefers to be again active during next time slot $t+1$, it
does not need to change its energy mode and there will be no
switching energy cost ($e^n_{sw}(t)=0$). Thus, its weight is equal
to $Q_n(t)\mu_n(t)-Ve_1^n(t)$ according to \eqref{eq:obj}. On the
other hand, if it switches its mode from active to the sleep, then
it obtains the weight which is equal to $-Ve_0^n(t)-Ve_{01}^n(t)$
since it cannot transmit in sleeping mode and $\mu_n(t)=0$ and also
it pays a switching energy cost which is equal to $e^n_{10}(t) > 0$.
Therefore, the node $n$ prefers to stay in active mode if and only
if the weight obtained in active mode exceeds the weight resulting
from switching to sleep mode. In other words, the following
inequality should be hold.
\begin{equation*}
-Ve_0^n(t)-Ve_{10}^n(t) < Q_n(t)\mu_n(t)-Ve_1^n(t)
\end{equation*}
Thus
\begin{equation}
\frac{V(e_{1}^n(t)-e_0^n(t)-e_{10}^n(t))}{\mu_n(t)} < Q_n(t)
 \label{eq:queueac}
\end{equation}
Furthermore, since ESS algorithm aims to maximize \eqref{eq:obj},
the node $n$ continues to be active and transmits if and only if it
is the maximizer of \eqref{eq:obj}.
\begin{align}
\max_{b \in Sl} \left[Q_b(t) \mu_b (t)-Ve_{01}^b(t)-Ve_1^b(t)\right]& \nonumber \\
 < Q_n(t) \mu_n(t)-Ve_1^n(t) &
 \label{eq:queueac1}
\end{align}
Therefore, as long as inequalities \eqref{eq:queueac} and
\eqref{eq:queueac1} hold, active sensor node transmits in a bursty
fashion and the other nodes do not change their energy modes. This
completes the first part of Lemma 1.

On the other hand, if the sleeping node $m$ prefers to stay in sleep
mode during next slot $t+1$, according to equation~(\ref{eq:obj})
its weight is $-V e_0^{m}(t)$ since it cannot transmit then
$\mu_{m}(t)=0$ and there will no switching cost $e^{m}_{sw}(t)=0$.
If it switches to active mode and transmit, then it obtains the
weight which is equal to
$Q_{m}(t)\mu_{m}(t)-Ve_{01}^{m}(t)-Ve_1^{m}(t)$. Therefore, the node
$m$  node continues to stay in sleep mode if and only if its queue
sizes is not large enough to cover the reward being in sleep mode.
In other words, it stays in sleep mode as long as the following
inequality holds,
\begin{equation*}
-Ve_0^{m}(t) > Q_{m}(t)\mu_{m}(t)-Ve_{01}^{m}(t)-Ve_1^{m}(t)
\end{equation*}
Thus
\begin{equation}
\frac{V(e_{1}^{m}(t)+e_{01}^{m}(t)-e_0^{m}(t))}{\mu_{m}(t)} > Q_{m}(t)
 \label{eq:queue}
\end{equation}
Similarly, the active node prefers to switch to sleep mode, if its
backlog is not large enough to maximize \eqref{eq:obj}. Therefore,
the system is idle if the following inequalities are satisfied
\begin{align}
Q_n(t) < & \frac{V(e_{1}^n(t)-e_0^n(t)-e_{10}^n(t))}{\mu_n(t)} & \forall n \in Ac \\
Q_{m}(t) < & \frac{V(e_{1}^{m}(t)+e_{01}^{m}(t)-e_0^{m}(t))}{\mu_{m}(t)}
& \forall {m} \in Sl
\end{align}
\end{proof}
\section{Throughput optimality of ESS}
Despite the fact that \eqref{eq:basic-optim}-(7) is a convex
optimization problem, a direct solution is generally overly
complicated since the arrival rates resulting network stability does
not admit in general a simple characterization. Fortunately, we can
use the framework of \cite{neely}, and obtain a dynamic scheduling
policy that operates arbitrarily closely to the optimal point. This
is obtained in two steps: first, a dynamic scheduling policy that
achieves the stability of transmission queues whenever the arrival
rates are inside the stable network operating region is obtained.
Second, we define a stationary randomized algorithm and compare the
randomized policy with our ESS algorithm.
\subsection{Optimal Stationary Randomized Algorithm}
Now, we present a stationary randomized algorithm which is used to
prove the stability property and throughput optimality of ESS
algorithm. The randomized algorithm schedules the nodes according to
a stationary, and possibly randomized function of only the data
rates and it is independent of the queue backlog. Let us denote the
stationary randomized algorithm as $RND$. With $RND$ algorithm, the
switching decision is determined by a Markovian process with two
states; sleep and active. We denote the transition probabilities
from active to sleep states and from sleep to active states as
$p_{10}$ and $p_{01}$ respectively. The steady-state probabilities
of the active and sleep modes are denoted as $\pi_{a}$ and $\pi_{s}$
respectively. We assume that channel processes are ergodic with
steady-state probabilities $\pi_{k}$ where $k$ is describing the
channel state in a finite set, ${k}\in \{1,2,...,M\}$. The following
theorem specifies the minimum energy required for stability, among
the class of stationary policies that randomly decides on
sleep-active modes and transmit scheduling.
\begin{theorem}
If arrival rate vector is strictly in the capacity region
$\boldsymbol{\lambda} \in \Lambda$, then the minimum energy that can
stabilize the network is denoted by  $h^*$ and it is equal to the
solution of the following optimization problem that is defined in
terms of transition probabilities $(p^n_{ij})_k$ for $(i,j) \in
\{0,1\}$, steady-state probabilities $(\pi_l^n)_k$ for $l\in
\{a,s\}$ and transmission probabilities $(\pi_{tr}^n)_k$ at channel
state $k \in \mathbf{M}$.
 \begin{align*}
\min \quad & \sum_{k \in \mathbf{M}} \pi_k \sum_{n}(p_{10}^n)_k
(\pi_s^n)_k + (p_{01}^n)_k (\pi_a^n)_k e_{sw}^n + (\pi_s^n)_k e_0^n\\
& \qquad \quad \quad \; +(\pi_a^n)_{k} (c^n(t)+(\pi_{tr}^n)_k(\mu_n)_k\alpha_n) \\
s.t. \quad & \sum_{k \in \mathbf{M}} \pi_k (\pi_a^n)_{k} (\pi_{tr}^n)_k(\mu_n)_{k} \geq \lambda_n \quad \forall n\\
    \quad & \sum_k (\pi_p^n)_k \leq 1  \qquad \; \; \; \forall p
    \in \{a,s\},\quad \forall n\\
    \quad &\sum_{n \in Ac} (\pi_{tr}^{n})_k  \leq 1 \quad \qquad \quad \forall k
\end{align*}
where $(\mu_n)_k$ is the data rate of node $n$ at channel state
$k \in \mathbf{M}$.
\end{theorem}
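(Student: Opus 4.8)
The plan is to characterize the class of stationary randomized ($RND$) policies entirely by their steady-state statistics, and then to prove the theorem by matching two opposite inequalities: (i) every $RND$ policy that stabilizes the network corresponds to a \emph{feasible} point of the stated program, whose energy equals the objective evaluated there, so that $h^\ast$ can be no smaller than the optimal value; and (ii) conversely, an optimal solution of the program can be realized by a concrete stabilizing $RND$ policy achieving exactly that energy. Together these force $h^\ast$ to equal the optimal value.

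First I would compute the long-run average energy of an arbitrary $RND$ policy. Because the active/sleep decision of each node evolves as an ergodic two-state Markov chain with transition probabilities $(p_{ij}^n)_k$ and stationary distribution $(\pi_l^n)_k$, while the channel $\mathbf{M}(t)$ is i.i.d.\ with probabilities $\pi_k$, the time average in the definition of $\overline{H}$ converges to its steady-state expectation. Conditioning on the channel state $k$ and on the node's mode, the per-slot expected energy decomposes into a switching term governed by the long-run frequency of mode changes, $(p_{10}^n)_k (\pi_s^n)_k + (p_{01}^n)_k (\pi_a^n)_k$, weighted by $e_{sw}^n$; a sleep term $(\pi_s^n)_k e_0^n$; and an active term $(\pi_a^n)_k \big(c^n(t) + (\pi_{tr}^n)_k (\mu_n)_k \alpha_n\big)$. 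Summing over nodes and averaging over channel states with weights $\pi_k$ reproduces exactly the objective of the program, so that objective \emph{is} the energy consumed by the $RND$ policy with those parameters.

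Next I would identify the constraints and obtain the lower bound. The interference structure of the uplink (at most one node transmits per slot, cf.\ Remark~2) forces $\sum_{n}(\pi_{tr}^n)_k \le 1$ for every channel state, and any valid mode distribution satisfies $\sum_p (\pi_p^n)_k \le 1$. The essential constraint is the stability constraint, which I would derive from a flow-conservation argument: for $Q_n$ to meet the strong-stability criterion of the Definition, the long-run service rate it receives, $\sum_k \pi_k (\pi_a^n)_k (\pi_{tr}^n)_k (\mu_n)_k$, must dominate its arrival rate $\lambda_n$. Hence every stabilizing $RND$ policy yields a feasible point, and by the energy computation its consumption equals the objective evaluated there; the infimum of such energies is therefore at least the optimal value, giving $h^\ast \ge$ (optimal value).

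For the reverse inequality I would take an optimal solution $\{(p_{ij}^n)_k,(\pi_l^n)_k,(\pi_{tr}^n)_k\}$ and build the $RND$ policy that, in channel state $k$, assigns node $n$ to active/sleep according to the stationary distribution induced by $(p_{ij}^n)_k$ and transmits with probability $(\pi_{tr}^n)_k$, with ties broken so the single-transmitter constraint holds; this policy is feasible by construction, its service rate dominates $\boldsymbol{\lambda}$, so it stabilizes the network, and its average energy equals the optimal value, yielding $h^\ast \le$ (optimal value) and hence equality. The main obstacle I anticipate is the stability direction: rigorously extracting the rate constraint from \emph{strong} (not merely rate) stability, and conversely using the strict interiority $\boldsymbol{\lambda}\in\Lambda$ to supply the slack that guarantees the constructed stationary policy induces a negative Lyapunov drift and truly stabilizes every queue. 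The energy accounting itself is routine once ergodicity of the mode chain and independence of the channel are invoked.
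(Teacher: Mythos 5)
Your direction (i) --- the ergodic energy accounting that maps every stabilizing $RND$ policy to a feasible point of the program with matching objective value --- is sound, and it actually supplies detail that the paper itself delegates to its citation of Neely's energy-optimal control framework. The genuine gap is in your direction (ii). At an optimal point of the program the rate constraint $\sum_k \pi_k (\pi_a^n)_k(\pi_{tr}^n)_k(\mu_n)_k \geq \lambda_n$ is binding, because the objective is increasing in $(\pi_{tr}^n)_k$ (and in $(\pi_a^n)_k$), so any slack could be converted into strictly lower energy. But a stationary randomized policy whose mean service rate exactly equals the arrival rate gives each queue a zero-drift reflected random walk, and such a queue is \emph{not} strongly stable in the sense of the paper's Definition 1: $\mathbb{E}[Q_n(t)]$ grows without bound (on the order of $\sqrt{t}$), so the time-average criterion diverges. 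Hence the policy you construct from the optimizer does not stabilize the network, and $h^* \leq \mathrm{OPT}$ cannot be obtained by realizing the optimizer itself. Your proposed repair --- invoking strict interiority of $\boldsymbol{\lambda}$ to ``supply the slack'' --- does not work as stated: interiority of $\boldsymbol{\lambda}$ in $\Lambda$ guarantees that feasible points \emph{with} slack exist, but it creates no slack at the optimum of the program, which is exactly where you need it.

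The paper closes this gap with a limiting argument, and that argument is essentially its only technical content beyond the appeal to Neely. For each $\epsilon>0$ with $\boldsymbol{\lambda}+\boldsymbol{\epsilon} \in \Lambda$ there is a stationary randomized policy serving at rate at least $\lambda_n+\epsilon$; such a policy has strictly negative drift, hence is strongly stabilizing, and the minimum energy over this class is $h^*(\epsilon)$. The paper then sandwiches
\begin{equation*}
h^* \;\leq\; h^*(\epsilon) \;\leq\; \Bigl(1-\tfrac{\epsilon}{\epsilon_{max}}\Bigr)h^* + \tfrac{\epsilon}{\epsilon_{max}}\, h^*(\epsilon_{max}),
\end{equation*}
where the upper bound comes from time-sharing (mixing) the energy-optimal point with the $\epsilon_{max}$-policy, and concludes $h^*(\epsilon)\rightarrow h^*$ as $\epsilon \rightarrow 0$. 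In other words, the ``minimum energy that can stabilize'' is an infimum approached along $\epsilon$-slack policies rather than a minimum attained by realizing the program's optimizer. Your two-inequality sandwich should be restated in this limiting form; with that modification, your direction (i) together with the paper's continuity argument yields the theorem.
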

\begin{proof} The proof follows the same line of logic as the proof
given~\cite{neely2}. The optimization problem above can be solved
since the set of channel state is finite and the energy set is
compact. Thus, we can achieve the minimum energy by a randomized
algorithm which decides the switching decision with probabilities
$(p^n_{ij})_k$ and selects the transmission node with probability $(\pi_{tr}^n)_k$.

Since $\boldsymbol{\lambda}$ lies in the interior of
the network capacity region $\Lambda$, it follows that there exists
a small positive $\boldsymbol{\epsilon} > 0$ such that
$\boldsymbol{\lambda}+ \boldsymbol{\epsilon} \in \Lambda$. Then,
there exist a stationary policy which stabilizes the network while
providing a data rate of
\begin{align}
\mathbb{E}(\boldsymbol{\hat{\mu}}(t)) \geq \boldsymbol{\lambda} +
\boldsymbol{\epsilon}, \label{eq:rnd_mu}
\end{align}
\noindent where $\boldsymbol{\hat{\mu}}$ is the transmission rate
induced by the randomized policy. Let $h^*(\epsilon)$ be the minimum
energy consumed by such policy. Since expected transmission rate is
greater than the arrival rate, the network is stable. In addition,
since $h^*(\epsilon)$ is the minimum energy required to stabilize
the network with arrival rate $\boldsymbol{\lambda} +
\boldsymbol{\epsilon}$, any mixed strategy results in higher energy
consumption. Therefore, it is obvious that
\begin{align}
h^* \leq  h^* (\epsilon) \leq \left( 1-
\frac{\epsilon}{\epsilon_{max}} \right) h^* +
\frac{\epsilon}{\epsilon_{max}}  h^* (\epsilon_{max}) \label{eq:gav}
\end{align}

\noindent where $\epsilon_{max}$ is the largest scalar where
$\lambda_n + \epsilon_{max} \in \Lambda$. When
$\epsilon=\epsilon_{max}$, nodes consume much more energy to
stabilize the queues. Therefore, $h^* (\epsilon_{max}) \geq h^*
(\epsilon) \; \forall \epsilon$. It follows that $h^* (\epsilon)
 \rightarrow h^*$ as $\epsilon \rightarrow 0$.\\
\end{proof}
 Theorem 1 indicates that the minimum energy consumption for stability is
achieved among the class of stationary policies that measure the
current channel state $\mathbf{M}(t)$, and then randomly decides
switching and selects the user which transmits.
\subsection{Analysis of ESS}
In our work, we use Lyapunov drift and optimization tools
~\cite{neely} to show the optimality of ESS algorithm. The advantage
of this tool is the ability to deal with performance optimization,
and queue stability problems simultaneously in a unified framework.
For node $n$ the queue dynamics is given by
\begin{equation}
Q_n(t+1)= \max(Q_n(t)-\mu_n(t),0)+R_n(t) \label{eq:queue_dyn}
\end{equation}
We can write the following inequality by using the fact $([a]^+)^2
\leq (a)^2, \quad \forall a$:
\begin{align}
Q^2_n(t+1) \leq &Q^2_n(t)\\&+(\mu_{max})^2+(R_{max})^2 -\nonumber
2Q_n(t)[\mu_n(t)-R_n(t)] \label{eq:queuek}
\end{align}
Let $\textbf{Q}(t)=(Q_1(t),Q_2(t),...,Q_N(t))$  be a queue vector
and define the following quadratic Lyapunov function
\begin{equation}
L(\mathbf{Q}(t))\triangleq \frac{1}{2} \sum_{n=1}^N(Q_n^2(t))
\end{equation}
and the conditional Lyapunov drift is
\begin{equation}
\Delta(t)\triangleq \mathbb{E}
\left\{L(\mathbf{Q}(t+1))-L(\mathbf{Q}(t))|\mathbf{Q}(t) \right\}
\end{equation}
Then, by using (17) Lyapunov drift of the system
satisfies the following inequality at every time slot,
\begin{equation}
\Delta(t) \leq B -
\sum_n\mathbb{E}\left\{Q_n(t)(\mu_n(t)-R_n(t))|\mathbf{Q}(t)\right\},\label{eq:Lyp_drift}
\end{equation}
where
\begin{equation}
B=\frac{K}{2} ((\mu_{max})^2+(R_{max})^2).
\end{equation}
In addition, we define a cost function
$\mathbb{E}(H(t)|\mathbf{Q}(t))$ as the expected total energy
consumption during time slot $t$. After adding the cost function
multiplied by $V$ to both sides of (\ref{eq:Lyp_drift}), we have the
following.
\begin{align}
&\Delta (t) +  V \mathbb{E}(H(t)|\mathbf{Q}(t)) \nonumber\\
& \leq B -\sum_n\mathbb{E}\left\{Q_n(t)(\mu_n(t)-R_n(t))|\mathbf{Q}(t)\right\}+V
\mathbb{E}(H(t)|\mathbf{Q}(t))\nonumber\\
&\leq B - \sum_n\mathbb{E}\left\{Q_n(t)\mu_n(t)|\mathbf{Q}(t) \right\} + V \sum_n
\sum_s \mathbb{E}\left\{\textrm{e}_s^n (t) |\mathbf{Q}(t) \right\}\nonumber\\
&+V \sum_n
\mathbb{E}\left\{\textrm{e}_{sw}^n (t) |\mathbf{Q}(t) \right\}\nonumber\\
& \leq B-\underbrace{\sum_n\mathbb{E}\left\{Q_n(t)\mu_n(t)-V
\textrm{e}_{sw}^n (t)- V \sum_s \textrm{e}_s^n (t)
|\mathbf{Q}(t)\right\}}_{RHS^{ESS}}
 \label{eq:ess}
\end{align}
where $RHS^{ESS}$ is the right hand side of ~(\ref{eq:ess}). It is
now straightforward to see that ESS algorithm minimizes $RHS^{ESS}$
over all possible stationary algorithms.

We now present a fact given in~\cite{neely} that will be used to
prove that ESS is throughput optimal.
\begin{theorem} (Lyapunov Stability)
For scalar valued function $g(.)$ for which the minimum value is
equal to $g^*$, If there exists positive constants $V, \epsilon, B$
such that for all time slots $t$ and all backlog vectors
$\mathbf{Q}(t)$, the Lyapunov drift satisfies
\begin{align}
\Delta (t) +  V \mathbb{E}(g(t)|\mathbf{Q}(t)) \leq B-\epsilon
\sum_{i=1}^N Q_i(t) + V g^* , \label{eq:lemma2}
\end{align}
then the system is stable and the time average backlog satisfies:
\begin{align*}
 \limsup_{T\rightarrow \infty} \frac{1}{T} \sum_{t=0}^{T-1}
\sum_{i=1}^N \mathbb{E}\{Q_i(t)\} \leq \frac{B+V g^*}{\epsilon}
\end{align*}
\end{theorem}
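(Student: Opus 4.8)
The plan is to run the standard drift-telescoping argument from the Lyapunov optimization framework. First I would remove the conditioning in the hypothesis by taking an expectation over the distribution of $\mathbf{Q}(t)$. Using the tower property of conditional expectation, $\mathbb{E}[\Delta(t)] = \mathbb{E}[L(\mathbf{Q}(t+1))] - \mathbb{E}[L(\mathbf{Q}(t))]$, so the assumed inequality~(\ref{eq:lemma2}) turns into the unconditional recursion
\[
\mathbb{E}[L(\mathbf{Q}(t+1))] - \mathbb{E}[L(\mathbf{Q}(t))] + V\mathbb{E}[g(t)] \leq B - \epsilon\sum_{i=1}^N \mathbb{E}[Q_i(t)] + Vg^*.
\]

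Next I would exploit that the penalty $g$ here is an energy quantity and hence non-negative, so $V\mathbb{E}[g(t)] \geq 0$ and this term can be dropped from the left-hand side. Summing the resulting inequality over $t = 0, 1, \ldots, T-1$ makes the Lyapunov terms telescope, leaving only $\mathbb{E}[L(\mathbf{Q}(T))] - \mathbb{E}[L(\mathbf{Q}(0))]$ on the left. After rearranging, this yields
\[
\epsilon\sum_{t=0}^{T-1}\sum_{i=1}^N \mathbb{E}[Q_i(t)] \leq BT + VTg^* + \mathbb{E}[L(\mathbf{Q}(0))] - \mathbb{E}[L(\mathbf{Q}(T))].
\]

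Then I would use the non-negativity $L(\mathbf{Q}(T)) \geq 0$ to discard the last term, divide both sides by $\epsilon T$, and take $\limsup_{T\to\infty}$. Provided the initial backlog has finite Lyapunov value (e.g. $\mathbf{Q}(0)=\mathbf{0}$), the transient term $\mathbb{E}[L(\mathbf{Q}(0))]/(\epsilon T)$ vanishes in the limit, producing the claimed bound $\frac{B+Vg^*}{\epsilon}$; since this bound is finite, every queue is strongly stable in the sense of Definition~1.

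The argument is essentially bookkeeping, so the only real subtlety is the non-negativity step: to recover the stated constant $\frac{B+Vg^*}{\epsilon}$ rather than the sharper $\frac{B}{\epsilon}$, one drops $V\mathbb{E}[g(t)]$ using $g \geq 0$ instead of the tighter lower bound $g(t)\geq g^*$. I would also confirm that the initial Lyapunov value is finite so that the transient term genuinely disappears in the limit, after which everything else follows from the telescoping sum and the tower property.
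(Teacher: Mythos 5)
Your proposal is correct, but there is nothing in the paper to compare it against: the paper states this theorem as an imported fact (``a fact given in the reference of Georgiadis, Neely and Tassiulas'') and provides no proof of it at all. What you have written is precisely the standard argument that underlies the cited result: remove the conditioning via the tower property, sum the unconditional drift inequality over $t=0,\dots,T-1$ so that the Lyapunov terms telescope, discard $\mathbb{E}\{L(\mathbf{Q}(T))\}\geq 0$, divide by $\epsilon T$, and take $\limsup_{T\to\infty}$, with the transient $\mathbb{E}\{L(\mathbf{Q}(0))\}/(\epsilon T)$ vanishing under a finite initial condition. Your treatment of the penalty term is also the right one, and the subtlety you flag is genuine rather than cosmetic: recovering the stated constant $\frac{B+Vg^*}{\epsilon}$ requires dropping $V\mathbb{E}\{g(t)\}$ via $g\geq 0$, which holds here because $g$ is instantiated as the energy $H(t)\geq 0$ (and is an implicit boundedness-below hypothesis in the cited framework), whereas using only $g(t)\geq g^*$ yields $\frac{B}{\epsilon}$, which implies the stated bound only when $g^*\geq 0$. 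Indeed, for a general scalar $g$ with $g^*<0$ the stated constant becomes negative for large $V$ and the claim would be false, so the non-negativity you invoke is an assumption the theorem statement silently omits; noting it strengthens rather than weakens your write-up.
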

\begin{theorem}
The proposed ESS algorithm is throughput optimal.
\end{theorem}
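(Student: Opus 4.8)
The plan is to feed the drift-plus-penalty inequality \eqref{eq:ess} into the Lyapunov stability criterion of Theorem 2, using the slot-by-slot optimality of ESS to compare it against the stationary randomized policy $RND$ supplied by Theorem 1. First I would rewrite the arrival contribution in \eqref{eq:Lyp_drift} exactly rather than discarding it: since arrivals are i.i.d.\ and independent of the backlog, $\mathbb{E}\{R_n(t)\mid\mathbf{Q}(t)\}=\lambda_n$, so adding the energy penalty gives
\[
\Delta(t)+V\mathbb{E}(H(t)\mid\mathbf{Q}(t)) \le B + \sum_n Q_n(t)\lambda_n - W^{\pi}(t),
\]
where $W^{\pi}(t)=\sum_n\mathbb{E}\{Q_n(t)\mu_n(t)-Ve_{sw}^n(t)-V\sum_s e_s^n(t)\mid\mathbf{Q}(t)\}$ is the weight collected by an arbitrary feasible policy $\pi$. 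Because ESS maximizes \eqref{eq:obj} in every slot over every channel and queue realization, it maximizes $W(t)$ pointwise, so $W^{ESS}(t)\ge W^{\pi}(t)$; hence replacing $W^{ESS}(t)$ by $W^{\pi}(t)$ on the right-hand side preserves the inequality for \emph{any} competing $\pi$.

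Next I would substitute $\pi=RND$, the optimal stationary randomized policy of Theorem 1 evaluated at slack $\epsilon$. Since $RND$ ignores the backlog, the conditional expectation factors and $W^{RND}(t)=\sum_n Q_n(t)\,\mathbb{E}(\hat{\mu}_n(t)) - V\,\mathbb{E}(H^{RND}(t))$. Inserting $\mathbb{E}(\hat{\mu}_n(t))\ge\lambda_n+\epsilon$ from \eqref{eq:rnd_mu} and $\mathbb{E}(H^{RND}(t))=h^*(\epsilon)$, the $\sum_n Q_n(t)\lambda_n$ terms cancel and the bound collapses to
\[
\Delta^{ESS}(t)+V\mathbb{E}(H^{ESS}(t)\mid\mathbf{Q}(t)) \le B - \epsilon\sum_n Q_n(t) + V h^*(\epsilon),
\]
which is precisely the hypothesis \eqref{eq:lemma2} of Theorem 2 with $g^*=h^*(\epsilon)$.

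Finally, applying Theorem 2 yields strong stability of every queue together with the explicit time-average bound $\limsup_{T\to\infty}\frac1T\sum_{t=0}^{T-1}\sum_n\mathbb{E}\{Q_n(t)\}\le (B+Vh^*(\epsilon))/\epsilon$. Since any $\boldsymbol{\lambda}$ strictly interior to $\Lambda$ admits a positive $\epsilon$ with $\boldsymbol{\lambda}+\boldsymbol{\epsilon}\in\Lambda$, ESS stabilizes the network for all such arrival rates, which is exactly throughput optimality; letting $\epsilon\to 0$ additionally recovers the energy guarantee in terms of $h^*$ via \eqref{eq:gav}.

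I expect the main obstacle to be the comparison step rather than any computation. The delicate point is that ESS is queue-dependent, yet its per-slot optimality of \eqref{eq:obj} is exactly what licenses bounding its drift by that of the queue-\emph{independent} policy $RND$, for which expectations factor cleanly. The $\epsilon$-slack $\mathbb{E}(\hat{\mu}_n)\ge\lambda_n+\epsilon$ from Theorem 1 is the ingredient that turns the otherwise non-negative arrival contribution into the strictly negative term $-\epsilon\sum_n Q_n(t)$, and one must be careful to reinstate the $\sum_n Q_n(t)\lambda_n$ term that \eqref{eq:ess} suppresses, since it is precisely this term that cancels to expose the negative drift.
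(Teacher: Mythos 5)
Your proposal is correct and takes essentially the same route as the paper: bound the drift-plus-penalty expression, use the per-slot maximization of \eqref{eq:obj} to compare ESS against the stationary randomized policy $RND$ supplied by Theorem 1, let the $\epsilon$-slack of \eqref{eq:rnd_mu} produce the $-\epsilon\sum_n Q_n(t)$ term, and close by invoking Theorem 2. Your bookkeeping is in fact slightly tidier than the paper's---you explicitly reinstate the arrival term $\sum_n \mathbb{E}\{Q_n(t)R_n(t)|\mathbf{Q}(t)\}$ that \eqref{eq:ess} silently drops and only reappears inside $RHS^{RND}$, and you carry $h^*(\epsilon)$ where the paper uses the cruder bound $H_{max}$ in \eqref{eq:ra1}---but these are refinements of the same argument, not a different one.
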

\begin{proof}
Clearly, the RHS of the randomized policy RND is given as follows
\begin{equation}
 RHS^{RND}= B - \sum_n\mathbb{E}\left\{Q_n(t)(\hat{\mu}_n(t)-R_n(t))|\mathbf{Q}(t)\right\}
\end{equation}
\noindent where $\hat{\mu}_n(t)$ is the transmission rate achieved by RND at time slot $t$.
By using~(\ref{eq:rnd_mu}) and adding the cost function,
we have
\begin{align}
B - \epsilon \sum_n Q_n(t) + &V \mathbb{E}
(\hat{H}_{\epsilon}(t)|\mathbf{Q}(t))\nonumber\\ &\leq B - \epsilon
\sum_n Q_n(t) + V H_{max} \label{eq:ra1}
\end{align}
where $\hat{H}_{\epsilon}(t)$ is the actual energy consumption of
the randomized policy during time slot $t$ and it depends on $\epsilon$.
By definition,
$RHS^{ESS} \leq RHS^{RND}$. Then, we obtain
\begin{align}
RHS^{ESS} \leq RHS^{RND}\leq B- \epsilon \sum_n Q_n(t) +  V
H_{max}\label{eq:ra2}
\end{align}
Thus
\begin{align}
\Delta(t) +  V \mathbb{E}(H(t)|\mathbf{Q}(t))\leq B- \epsilon \sum_n
Q_n(t) +  V H_{max}\label{eq:rason}
\end{align}
 \noindent It is straightforward to see that~(\ref{eq:rason})
has exactly same form as~(\ref{eq:lemma2}). Thus, this proves the
theorem.
\end{proof}
\subsection{Optimality of ESS} In this section, we show that ESS
algorithm yields asymptotically optimal solution to (6)-(7). In
other words, the average energy consumption proposed by the ESS
algorithm can be made arbitrarily close to the global minimum
solution of (6)-(7), by selecting a sufficiently large value of a
real-valued constant $V$. The following fact from~\cite{neely}
establishes a bound on any scalar-valued function $g(.)$.
\begin{theorem}(Lyapunov Optimization)
If~(\ref{eq:lemma2}) holds then the time average energy satisfies
the following,
\begin{align}
 \limsup_{T\rightarrow \infty} \frac{1}{T} \sum_{t=0}^{T-1}
 \mathbb{E}\{g(t)\} \leq g^* +  \frac{B}{V}
 \label{eq:lemma3}
\end{align}
\end{theorem}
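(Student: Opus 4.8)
The plan is to turn the one–slot drift–plus–penalty bound \eqref{eq:lemma2} into a statement about long–run time averages by telescoping. First I would observe that the queue penalty term on the right of \eqref{eq:lemma2} only helps: since $\epsilon > 0$ and every backlog satisfies $Q_i(t) \ge 0$, the term $-\epsilon \sum_{i=1}^N Q_i(t)$ is nonpositive and can be discarded, yielding the weaker but more convenient bound
\[
\Delta(t) + V\,\mathbb{E}\{g(t)\mid\mathbf{Q}(t)\} \le B + V g^*,
\]
valid for every $t$ and every $\mathbf{Q}(t)$.

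Next I would remove the conditioning. Writing out $\Delta(t) = \mathbb{E}\{L(\mathbf{Q}(t+1)) - L(\mathbf{Q}(t)) \mid \mathbf{Q}(t)\}$ and taking an expectation over the distribution of $\mathbf{Q}(t)$, the tower property gives
\[
\mathbb{E}\{L(\mathbf{Q}(t+1))\} - \mathbb{E}\{L(\mathbf{Q}(t))\} + V\,\mathbb{E}\{g(t)\} \le B + V g^*.
\]
Summing this inequality over $t = 0, 1, \dots, T-1$ collapses the Lyapunov terms telescopically, leaving only the endpoint values:
\[
\mathbb{E}\{L(\mathbf{Q}(T))\} - \mathbb{E}\{L(\mathbf{Q}(0))\} + V\sum_{t=0}^{T-1}\mathbb{E}\{g(t)\} \le T\,(B + V g^*).
\]

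The final step is to isolate the average penalty. Because $L$ is a sum of squares it is nonnegative, so $\mathbb{E}\{L(\mathbf{Q}(T))\} \ge 0$ can be dropped from the left side, and dividing through by $VT$ gives
\[
\frac{1}{T}\sum_{t=0}^{T-1}\mathbb{E}\{g(t)\} \le g^* + \frac{B}{V} + \frac{\mathbb{E}\{L(\mathbf{Q}(0))\}}{VT}.
\]
Taking $\limsup_{T\to\infty}$ and noting that the initial Lyapunov value $\mathbb{E}\{L(\mathbf{Q}(0))\}$ is finite, the last term vanishes and the claimed bound \eqref{eq:lemma3} follows.

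I do not expect any single step to be a genuine obstacle, since the argument is the standard telescoping of a drift bound. The only points that require care are the tower–property step that strips the conditioning on $\mathbf{Q}(t)$, and the justification that the boundary term $\mathbb{E}\{L(\mathbf{Q}(0))\}/(VT) \to 0$, which hinges on the initial backlog being finite---an assumption fully consistent with the bounded–arrival model ($R_n(t)\le R_{max}$) of Section III.
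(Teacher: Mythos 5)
Your proof is correct and follows the standard drift-plus-penalty telescoping argument. The paper does not actually prove this theorem itself---it imports it as a fact from~\cite{neely}---but your argument (iterate expectations via the tower property, telescope the Lyapunov terms, discard the nonnegative terminal term $\mathbb{E}\{L(\mathbf{Q}(T))\}$ and the nonpositive queue term $-\epsilon\sum_i Q_i(t)$, divide by $VT$, and let $T\to\infty$) is essentially the same argument the paper carries out, specialized to $g=H$, in its proof of Theorem 5.
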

\begin{theorem}
ESS algorithm is energy optimal and satisfies~(\ref{eq:lemma3}).
\end{theorem}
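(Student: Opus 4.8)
The plan is to instantiate the general Lyapunov Optimization bound (Theorem 4) with the scalar cost $g(t)=H(t)$ and optimal value $g^{*}=h^{*}$, where $h^{*}$ is the minimum stabilizing energy characterized in Theorem 1. Everything beyond this reduces to verifying that the drift-plus-penalty inequality produced by ESS fits the hypothesis (\ref{eq:lemma2}) of that theorem; once that is established, the conclusion (\ref{eq:lemma3}) follows by a direct application of Theorem 4, and energy optimality is read off by sending $V\to\infty$.

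First I would begin from the drift-plus-penalty inequality (\ref{eq:ess}) and invoke the defining property of ESS, namely that at each slot it greedily minimizes the right-hand side of (\ref{eq:ess}) over the entire class of stationary policies. This is the same opportunistic comparison already used to prove throughput optimality in Theorem 3: because ESS minimizes this expression, its bound can be no larger than the bound obtained by substituting \emph{any} other stationary policy, and in particular the stationary randomized policy $RND$ supplied by Theorem 1. The content of this step is that the greedy per-slot minimization performed by ESS dominates the backlog-independent $RND$ decision for every realization of $\mathbf{Q}(t)$.

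Next I would substitute $RND$ into the minimized right-hand side. By (\ref{eq:rnd_mu}) this policy achieves $\mathbb{E}(\hat{\mu}_n(t))\ge \lambda_n+\epsilon$ independently of the backlog while consuming expected energy $h^{*}(\epsilon)$. Using $\mathbb{E}(R_n(t))=\lambda_n$, the service-minus-arrival term collapses to $-\epsilon\sum_n Q_n(t)$ and the energy penalty becomes $V h^{*}(\epsilon)$, giving
\[
\Delta(t)+V\mathbb{E}(H(t)\,|\,\mathbf{Q}(t))\;\le\; B-\epsilon\sum_n Q_n(t)+V h^{*}(\epsilon),
\]
which is exactly of the form (\ref{eq:lemma2}) with $g^{*}=h^{*}(\epsilon)$. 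Note that this sharpens the inequality (\ref{eq:rason}): there the crude bound $\hat{H}_{\epsilon}(t)\le H_{max}$ sufficed for stability, whereas for optimality we must retain the tighter penalty $V h^{*}(\epsilon)$. Applying Theorem 4 then yields $\limsup_{T\to\infty}\frac{1}{T}\sum_{t=0}^{T-1}\mathbb{E}\{H(t)\}\le h^{*}(\epsilon)+B/V$.

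Finally, since the left-hand side is independent of $\epsilon$, I would let $\epsilon\to 0$ and use the continuity $h^{*}(\epsilon)\to h^{*}$ established at the end of Theorem 1 through the convexity bound (\ref{eq:gav}), obtaining the claimed estimate (\ref{eq:lemma3}) with $g^{*}=h^{*}$. Energy optimality is then immediate: the time-average ESS energy lies within $B/V$ of the global minimum $h^{*}$, so taking $V$ large pushes the gap to zero. I expect the main obstacle to be precisely the passage in the middle, justifying rigorously that the per-slot minimization of ESS translates into a policy comparison that is uniform in $\mathbf{Q}(t)$, and handling the $\epsilon$-dependence carefully so that interchanging the roles of the $T\to\infty$ and $\epsilon\to 0$ limits is legitimate.
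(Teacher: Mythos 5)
Your proposal is correct and follows essentially the same route as the paper: compare ESS's per-slot minimization of the drift-plus-penalty bound against the stationary randomized policy of Theorem 1, retain the tighter penalty $V h^{*}(\epsilon)$ rather than the crude $V H_{max}$ used for stability, and send $\epsilon \to 0$ after the $T\to\infty$ limit using $h^{*}(\epsilon)\to h^{*}$. The only cosmetic difference is that you invoke Theorem 4 as a black box, whereas the paper re-derives its content inline (taking expectations, summing over $t=0,\dots,T-1$, telescoping the drift, and dividing by $V$) before passing to the $\epsilon \to 0$ limit.
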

\begin{proof}
Rewriting~(\ref{eq:ra1}) and using~(\ref{eq:ra2}) we attain,
\begin{align}
\Delta (t) + V \mathbb{E}(H(t)|\mathbf{Q}(t)) \leq &B - \epsilon
\sum_n Q_n(t) \\&+ V \mathbb{E} (\hat{H}_{\epsilon}(t)|\mathbf{Q}(t))\nonumber
 \label{eq:ram}
\end{align}
We take the expectation of both sides of (29) and obtain
\begin{equation}
\Delta(t) + V \mathbb{E}(H(t)) \leq B - \epsilon \mathbb{E}
\left(\sum_n Q_n(t)\right) + V \mathbb{E} (\hat{H}_{\epsilon}(t))
 \label{eq:ra3}
\end{equation}
Note that inequality~(\ref{eq:ra3}) holds for any time slot $t$.
Hence, we sum both sides of~(\ref{eq:ra3}) from time slot $0$ to
$T-1$ and divide by $T$. As a result, we obtain
\begin{align}
&\frac{\mathbb{E} \left(\sum_n Q_n^2(T)\right)}{T}-\frac{\mathbb{E}
\left(\sum_n Q_n^2(0)\right)}{T} + \frac{V}{T} \sum_{t=0}^{T-1}
\mathbb{E}(H(t))\nonumber\\
 &\leq  B - \frac{\epsilon}{T}
\sum_{t=0}^{T-1} \sum_n \mathbb{E} (Q_n(t)) +  V \mathbb{E}
(\hat{H}_{\epsilon}(t))
 \label{eq:opt2}
\end{align}
Recall that in the previous section, we have shown that the second
term in the RHS of~(\ref{eq:opt2}) is bounded. By taking
$\limsup_{T\rightarrow \infty}$, we get
\begin{align}
\limsup_{T\rightarrow \infty} \frac{V}{T} \sum_{t=0}^{T-1} \sum_n
\mathbb{E} (H(t)) & \leq B +  V \mathbb{E} (\hat{H}_{\epsilon}(t))
 \label{eq:opt3}
\end{align}
Consequently, by dividing both sides of~(\ref{eq:opt3}) by $V$, we
have
\begin{align}
\limsup_{T\rightarrow \infty} \frac{1}{T} \sum_{t=0}^{T-1} \sum_n
\mathbb{E} (H(t)) & \leq \frac{B}{V} +  \mathbb{E}
(\hat{H}_{\epsilon}(t)) \label{eq:perf}
\end{align}
Note that $\mathbb{E} (\hat{H}_{\epsilon}(t)) \rightarrow h^*$ as
$\epsilon \rightarrow 0$ as we have shown in the previous section.
The bound in~(\ref{eq:perf}) is clearly minimized by taking a limit
as $\epsilon \rightarrow 0$, which yields~(\ref{eq:lemma3}).
\end{proof}
\subsection{Distributed Implementation} In this section, we present
an approximate solution to the scheduling problem (6)-(7) that does
not require a centralized mechanism and thus it is easy to implement
in a large-scale network. Recall that, the switching decision is
made by a centralized authority with ESS algorithm. However, in a
distributed way, each node can make a switching decision
individually according to the followings rules: 1-)  An active node
$n$ keeps staying in active mode as long as the following inequality
holds,
\begin{align}
Q_n(t) > & \frac{V(e_{1}^n(t)-e_0^n(t)-e_{10}^n(t))}{\mu_n(t)}, \quad \forall n \in Ac
\end{align}
Therefore, in distributed implementation there may be more than one
active nodes in the network at a given time slot. Note that, there
should be only one active node at each time slot with ESS algorithm,
and other nodes stay in sleep mode in order to reduce energy
consumption. Thus, we expect that ESS outperforms the distributed
algorithm in terms of lifetime. However, it is clear to see from
(10) that ESS algorithm requires a centralized authority to make a
switching decision since it requires the knowledge of the sleeping
node which maximizes the left hand side of (10).

2-) On the other hand, sleeping nodes remain in the sleep mode if
the following inequality is satisfied,
\begin{align}
Q_m(t) < & \frac{V(e_{1}^m(t)+e_{01}^m(t)-e_0^m(t))}{\mu_m(t)},
\quad \forall m \in Sl
\end{align}
\begin{algorithm}
 \algsetup{indent=1em}
\begin{algorithmic} [1]
 \STATE{\bf \footnotesize{Initialize}:} \footnotesize{Set all nodes to the sleep mode and the remaining energy $\scriptstyle E_n(0)=E,  n \in N, t=0$}
 \WHILE {$\scriptstyle E_i(t) >$ $0$}
 \STATE {\bf \footnotesize{generate}} \footnotesize{packet arrivals and transmission rate}
  \IF {$\textstyle i \in Sl $ and $  \textstyle Q_i(t) < \frac{V(e_{1}(t)+e_{01}(t)-e_0(t))}{\mu(t)} $}
   \STATE {\bf \footnotesize{compute}} $\scriptstyle E_i(t)=E_i(t)-e_0(t)$
   \STATE {\bf \footnotesize{compute}} $\scriptstyle weight(i)= - V e_0(t)$
 \ELSIF {$\textstyle i \in Sl $ and $  \textstyle Q_i(t)> \frac{V(e_{1}(t)+e_{01}(t)-e_0(t))}{\mu(t)} $}
 \STATE {\bf \footnotesize{add}} \footnotesize{node} $\scriptstyle i$ \footnotesize{to} $\scriptstyle Ac$ \COMMENT{\textit{\footnotesize{switch to active mode}}}
 \STATE {\bf \footnotesize{compute}} $\scriptstyle E_i(t)=E_i(t)-e_{01}(t)-c(t)$
 \STATE {\bf \footnotesize{compute}}  $\scriptstyle weight(i)=Q_i(t)\mu(t)- V (e_{01}(t)+e_1(t)+\alpha \mu(t))$
 \ENDIF
 \IF{$\textstyle i \in Ac $ and $  \textstyle Q_i(t < \frac{V(e_{1}(t)+\alpha \mu(t) - e_{10}(t)-e_0(t))}{\mu(t)} $}
 \STATE {\bf \footnotesize{add}} \footnotesize{node} $i$ \footnotesize{to} $\scriptstyle Sl$  \COMMENT{\textit{\footnotesize{switch to sleep mode}}}
 \STATE {\bf \footnotesize{compute}} $\scriptstyle E_i(t)=E_i(t)-e_{10}(t) - e_0(t)$
 \STATE {\bf \footnotesize{compute}} $\scriptstyle weight(i)= - V (e_{10}(t) + e_0(t))$
  \ELSIF {$\textstyle i \in Ac $ and $  \textstyle Q_i(t > \frac{V(e_{1}(t)+\alpha \mu(t) - e_{10}(t)-e_0(t))}{\mu(t)} $}
 \STATE {\bf \footnotesize{compute}} $\scriptstyle E_i(t)=E_i(t)-c(t)$
 \STATE {\bf \footnotesize{compute}} $\scriptstyle weight(i)=Q_i(t)\mu(t)- V (c(t)+\alpha \mu(t))$
 \ENDIF
 \IF {$\textstyle i \in Ac $}
 \STATE {\bf \footnotesize{compute}} $\scriptstyle E_i(t)=E_i(t)-e_b(t),
$ \COMMENT{\textit{\footnotesize{broadcast its weight}}}
 \IF {$ \textstyle i=\arg\max_j  weight(j) \quad j\in Ac$}
 \STATE $\scriptstyle E_{i}(t)=E_{i}(t)-\alpha(t) \mu_{i}(t)$
\ENDIF \ENDIF
  \STATE {\bf \footnotesize{update}} $\scriptstyle Q_i(t)$
  \STATE $\scriptstyle t=t+1$
 \ENDWHILE
\end{algorithmic}
\caption{\small{Distributed Algorithm}} \label{distributed}
\end{algorithm}
Otherwise, they switches to active mode. After determining the
energy modes, scheduling decision is given among the active nodes.
The knowledge of backlog levels in the neighboring nodes can be
maintained by infrequent broadcasting. The minimum energy
broadcasting method is proposed~\cite{Mario} and it can be applied
to our model in order to find the transmissions node. Afterwards,
the node which maximizes~(\ref{eq:sch}) is scheduled. A more formal
representation of distributed implementation is given below as
Algorithm 1.
\begin{equation}
 \max_{n\in Ac} \quad Q_n(t)\mu_n(t)-V \textrm{e}_{sw}^n (t)- V
e_{1}^n (t).
 \label{eq:sch}
 \end{equation}
\section{Numerical Results}
We consider a scenario where there are 5 sensor nodes and a base
station. The nodes observe their surrounding and generate random
packets and transmit them to the base station in a single-hop
fashion. We assume that each node has a battery with an initial
energy of 10 Joules. The arrival processes are assumed to be
Bernoulli processes with an average rate of 4 packets with size of
45 bytes per slot for all nodes and without loss of generality we
assume that a slot is 2 millisecond (ms) long. For a particular
node, there are three equally likely possible channel states,
i.e.,\emph{ Good, Medium, Bad} where the corresponding transmission
rate of a node is assumed to be 20, 12, and 5 packets respectively.
The parameters used in the simulation are given by~\cite{ruzzelli1}.
\begin{itemize}
  \item \textbf{\emph{sleep mode energy}}, $e_0$ : Nodes consume energy in sleep mode and it is equal to $0.015\times10^{-6} J $
  per millisecond.
  \item \textbf{\emph{active mode energy}}, $e_1$: The nodes in active mode consume
  energy due to being in active mode, $c(t)$ and also for the transmission
  of each packet, $\alpha(t)$. Thus, $e_1(t)=c(t) + \mu(t) \alpha(t)$. $c(t)$ is
  given as $36\times10^{-6} J$ per millisecond and $\alpha(t)$ is taken to be $30\times10^{-6}
  J$ per packet transmission.
  \item \textbf{\emph{energy due to switching from sleep to active}}, $e_{01}$ :
  Switching time from sleep to active is 0.7 ms and the switching energy from sleep to
  active is equal to $25.2 \times10^{-6} J$.
  \item \textbf{\emph{energy due to switching from active to sleep}}, $e_{10}$ : Switching time from active to sleep is 0.01 ms and the switching energy from active to
  sleep is equal to $2.85 \times10^{-6} J$.
  \item \textbf{\emph{Broadcast energy}}, $e_b(t)$: Broadcast energy is taken
  to be $8.33 \times10^{-8} J$ per bit.
\end{itemize}
If a sensor node is in sleep mode at the beginning of a time slot
and wants to switch to active mode, then the switching takes 0.7 ms.
Thus, the sensor can stay in active mode for a duration of 1.3 ms
since total slot duration is 2 ms second. Additionally, it consumes
a packet transmission energy if it is selected by ESS algorithm. As
a result, it consumes an amount of energy which is equal to $1.3 ms
\times c(t) + \alpha \mu(t)$. When the node keeps it energy mode
then, there will be no switching energy and the energy consumption
will be equal to only the sleep energy, $e_0(t)$. On the other hand,
if an active sensor is selected to be the active node the next slot,
then it can use entire slot for its own transmission since there
will be no switching time and the energy consumption will be $2 ms
\times c(t) + \alpha \mu(t)$. If it switches the sleep mode, the
energy spent by the node will be equal to $1.99 ms \times e_0(t) +
e_{10}$

First, we investigate the average energy consumption in sleep and
active modes for $V=500, 1000, 5000$, $10000,20000, 40000, 60000$
and $80000$. Figure~\ref{fig:eng} depicts the average energy spent
in active and sleep modes. As $V$ grows, the nodes run out their
batteries by staying in sleep mode most of the time. This prolongs
their lifetime since they consume the minimum energy in sleep mode
and this result agrees with Theorem 5. Due to the fact that the
nodes can not transmit in sleep mode, it yields an increase in queue
sizes of the nodes and consequently average delay in the network
increases.
 \begin{figure}
     \centering
     \scalebox{0.55}{\includegraphics{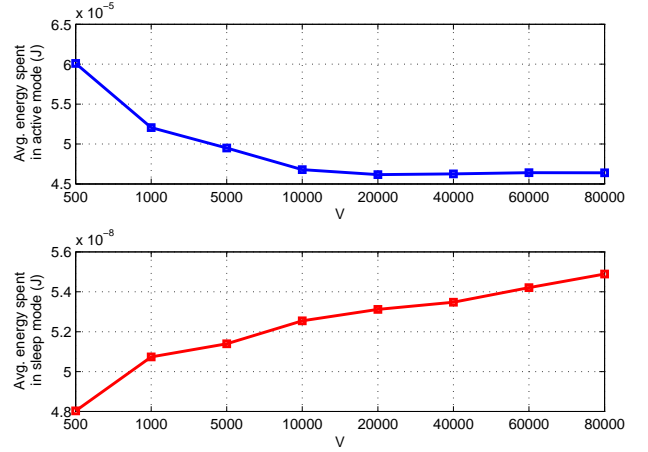}}
     \caption{\small{Avg. energy consumptions in active and sleep modes}}
     \label{fig:eng}
 \end{figure}
Next, we conduct a simulation experiment to observe the bursty
departure property of ESS algorithm explained in Lemma 1. As stated
in Lemma 1, the nodes transmit their packets consecutively in order
to reduce the switching energy. Figure~\ref{fig:bursty} depicts the
average number of bursty active staying which is the number of two
or more consecutive time slots where the node is selected to be the
active node and transmits. Higher values of $V$ push the nodes spend
the minimum energy and in order not to consume switching energy the
nodes are selected as the active node in a bursty fashion.
 \begin{figure}
     \centering
     \scalebox{0.45}{\includegraphics{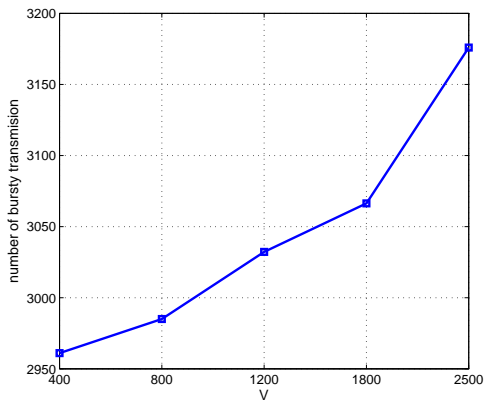}}
     \caption{\small{Bursty transmission}}
     \label{fig:bursty}
 \end{figure}
Furthermore, we compare the ESS algorithm with a ``benchmark
algorithm'' and the S-MAC type algorithm proposed in~\cite{smac}
,namely ``periodic algorithm'', in terms of the average queue
backlogs for different values of $V = 400, 800, 1200, 1800, 2500$.
Contrary to ESS, the benchmark algorithm ignores the switching
energy and makes the switching and scheduling decisions without
considering switching energy cost. Therefore, the nodes change their
modes more frequently. In S-MAC type \emph{periodic algorithm}, the
nodes stay in sleep mode during the fist 1 ms and during the last 1
ms, they stay in active mode. Furthermore, the node which maximizes
$Q_n(t) \mu_n(t) - V\alpha \mu_n(t)$ is selected as the transmitting
node during the active period. As it is seen in
Figure~\ref{fig:queue}, the benchmark algorithm outperforms ESS in
terms of the queue backlogs, since it stays in the active mode at
most of the time.
 \begin{figure}
     \centering
     \scalebox{0.50}{\includegraphics{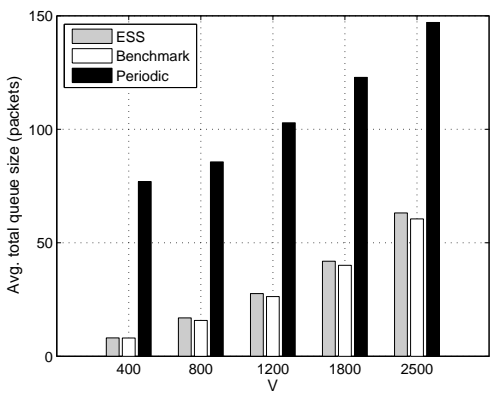}}
     \caption{\small{The average queue backlogs for different values of
     V}}
     \label{fig:queue}
 \end{figure}
On the other hand, since ESS promotes the nodes to stay in sleep
mode considering the switching cost, it prolongs the network
lifetime. The periodic algorithm has the worst performance since it
has fixed duty cycle property which can not ensure the efficient
time to reduce the queue backlogs as much as ESS algorithm.
Figure~\ref{fig:lifetime} depicts the comparison of the lifetime
performance induced by ESS, the benchmark, the distributed and the
periodic algorithms. The distributed algorithm allows more than one
nodes to be active at any time slot and the benchmark algorithm does
not care the switching energy cost. In addition, the periodic
algorithm forces the nodes to be active for 1 ms at every slot and
due tot he fixed duty cycle, $V$ does not effect the lifetime of
periodic algorithm. Thus, with these algorithms, the nodes stays in
active mode unnecessarily. As a result, batteries of the node run
out quickly. Therefore, EES shows better performance result in terms
of lifetime.
 \begin{figure}[]
     \centering
     \scalebox{0.50}{\includegraphics{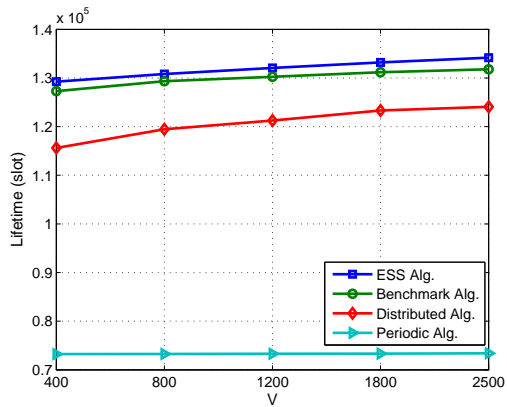}}
     \caption{\small{The network lifetime for different values of
     V}}
     \label{fig:lifetime}
 \end{figure}
Finally, Figure~\ref{fig:dutycycle} depicts the average duty cycle
of the network per node. As $V$ increases, the proportion of time
the nodes stay in active mode decreases to prevent the energy
consumption. Thus, the average duty cycle period decreases as well.
It is easy to see that the duty cycle per node can be approximately
at most \%13. As stated in~\cite{ruzzelli1}, the switching energy
needs to be computed if the node duty cycle is low. Thus, the effect
of switching energy becomes significant.
\begin{figure}[]
     \centering
     \scalebox{0.45}{\includegraphics{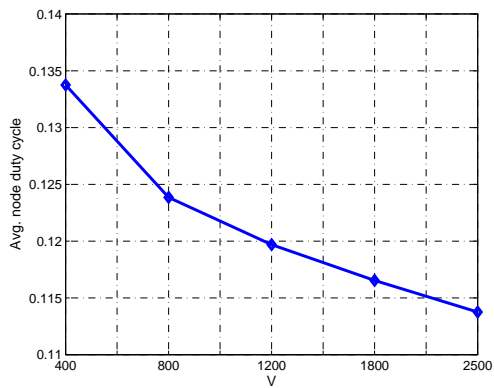}}
     \caption{\small{Duty cycle}}
     \label{fig:dutycycle}
 \end{figure}

\section{Conclusion}
In this paper, we investigate the energy efficiency and optimal
control problems in a low duty cycle wireless sensor network. Using
Lyapunov optimization technique, an energy-aware switching and
scheduling (ESS) algorithm is proposed. The switching energy is generally
not considered in energy management problems. However, analysis
indicate that ESS algorithm outperforms the benchmark algorithm
which ignores the switching energy cost in terms of the lifetime.
Furthermore, we show that we can stabilize the network by applying
ESS algorithm and push its performance to the optimal solution by
tuning $V$ parameter. In addition, we present a distributed algorithm that can be implemented easily.

In terms of future work, we investigate the power control problem.
Shortly, the power control problem is explained as follows.
If the sensor nodes have power control,
there is an inherent trade-off between the transmission power and
the time spent in active mode. Note that the transmission rate
$\mu_n(t)=\log\left(1+\frac{P_n(t)}{N}\right)$, where $P_n(t)$ and
$N$ are the transmission and ambient noise powers respectively, is
an increasing concave function. Therefore, as transmission power of
node $n$ increases, the number of bits sent increases and the
backlog of the active node drops below the aforementioned threshold
more quickly.

\end {document}